\documentclass[sigconf,nonacm]{acmart}
\setkeys{acmart.cls}{balance=false}
\usepackage{bm}
\usepackage{dsfont}  
\usepackage{multirow}
\usepackage{algorithm}
\usepackage{algpseudocode}
\usepackage{placeins}

\usepackage[capitalize,noabbrev]{cleveref}

\theoremstyle{acmplain}
\newtheorem{theorem}{Theorem}
\newtheorem{proposition}{Proposition}
\newtheorem{lemma}{Lemma}
\theoremstyle{acmdefinition}

\newcommand{\E}{\mathbb{E}}

\crefname{problem}{Problem}{Problems}
\crefname{theorem}{Theorem}{Theorems}
\crefname{proposition}{Proposition}{Propositions}
\crefname{lemma}{Lemma}{Lemmas}
\crefname{algorithm}{Algorithm}{Algorithms}

\ifdefined\pdfsuppressptexinfo\pdfsuppressptexinfo=-1\relax\fi

\AtBeginDocument{%
  }

\setcopyright{none}
\renewcommand{\footnotetextcopyrightpermission}[1]{}
\acmDOI{}
\acmISBN{}

\begin{document}
\title[Decoupling Procedural and Outcome Fairness]{Decoupling Procedural and Outcome Fairness in Seed Allocation for Influence Maximization}
\author{Xiaojing Du}
\affiliation{%
  \institution{Adelaide University}
  \city{Adelaide}
  \country{Australia}
}

\author{Chunlei Meng}
\affiliation{%
  \institution{Fudan University}
  \city{Shanghai}
  \country{China}
}

\author{Wentao Gao}
\affiliation{%
  \institution{Adelaide University}
  \city{Adelaide}
  \country{Australia}
}

\author{Ke Sun}
\affiliation{%
  \institution{Dalian University}
  \city{Dalian}
  \country{China}
}

\author{Feiyu Yang}
\affiliation{%
  \institution{Civil Aviation Flight University of China}
  \city{Guanghan}
  \country{China}
}

\renewcommand{\shortauthors}{Du et al.}

\begin{abstract}
On online social platforms, equal seed access to interventions can yield unequal group outcomes through social spillovers, so an access audit alone can miss outcome disparities. We characterise this trade-off through an exact allocation--structure decomposition and a sharp first-moment floor for graph-independent seeding under additive spillovers on stochastic block networks. Decoupled Procedural Allocation (DPA) combines group quotas with selection by seeds' contributions to each group's outcome. Rankings balance total outcome against disparity, and within-quota exchanges reduce the gap. Computable bounds quantify retained total outcome. DPA achieves the smallest mean outcome gap among the compared methods at the common access tolerances on both development and prospective suites. Exposure and independent-cascade studies show how the propagation mechanism changes this trade-off. On Pokec, DPA nearly equalises modelled gender outcomes at the closest access-parity quota while preserving most of the degree-based allocation's total outcome.
\end{abstract}

\begin{CCSXML}
<ccs2012><concept><concept_id>10002951.10003260.10003282.10003292</concept_id><concept_desc>Information systems~Social networks</concept_desc><concept_significance>500</concept_significance></concept><concept><concept_id>10003752.10003809.10003716</concept_id><concept_desc>Theory of computation~Mathematical optimization</concept_desc><concept_significance>300</concept_significance></concept></ccs2012>
\end{CCSXML}
\ccsdesc[500]{Information systems~Social networks}
\ccsdesc[300]{Theory of computation~Mathematical optimization}
\keywords{Fair influence maximization, procedural fairness, network interference, seed allocation}
\maketitle

\begin{figure}[t]
\centering
\includegraphics[width=\columnwidth]{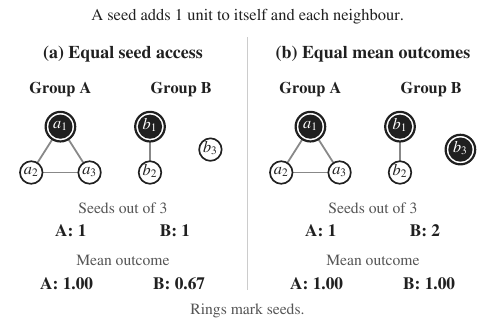}
\Description{Group A contains a triangle and group B contains an edge and an isolated node. With one seed in each group, their mean outcomes are one and two thirds. Adding B's isolated node as a seed equalises the means and increases B's seed share.}
\caption{Equal seed shares leave unequal group means. From zero baseline, each seed adds one unit to itself and each neighbour. Adding the isolated $B$ node equalises outcomes and raises its group seed rate from one third to two thirds. Rings mark seeds.}
\label{fig:toy}
\end{figure}

\section{Introduction}
\label{sec:intro}

Online social platforms grant selected users direct access to campaigns, promotions, or new services. Benefits propagate from these seed users through social ties to users without direct access. Influence maximization (IM) formalises the aggregate reach objective \citep{kempe2003}. Fair influence maximization (FIM) addresses benefit disparities that can leave groups underserved \citep{fish2019gaps,tsang2019,rahmattalabi2021welfare}.

Direct opportunities and propagated outcomes raise distinct fairness requirements. Group attributes are associated with different patterns of online social connectivity \citep{traud2012social}, so equal per-capita access can leave one group with fewer opportunities to receive spillovers. Conversely, equalising group outcomes can require unequal access to the initial intervention. We quantify attainable outcome disparity under access constraints and the total-outcome cost of reducing it. Since these spillovers travel through online friendship graphs whose structure shapes fairness \citep{saxena2024fairsna}, the problem belongs to the \emph{Social Networks and Social Media} track.

Existing FIM formulations include worst-group coverage and social welfare \citep{tsang2019,rahmattalabi2021welfare}, as well as proportional seed constraints \citep{farnadi2020unifying}. The split between seed constraints and outcome objectives mirrors procedural and distributive fairness \citep{greenberg1987}. Since reach depends on initial recipients' network positions \citep{banerjee2013diffusion,kim2015lancet,banerjee2019restud}, planners need to know the outcome disparity left by an access tolerance and the cost of reducing it.

\Cref{fig:toy} illustrates the structural difficulty. Group $A$ forms a triangle, while $B$ contains an edge and an isolated node. With one seed in each group, their mean outcomes are $1$ and $2/3$. Adding the isolated node equalises outcomes by giving $B$ more direct access. No nonempty seed set on this graph achieves both allocation parity and outcome parity.

Under additive network spillovers on a stochastic block model (SBM), the expected signed outcome difference of graph-independent seeding decomposes exactly into an allocation component and a structural component. This decomposition yields a sharp continuous lower bound on the magnitude of the expected difference under an access tolerance and identifies the allocation direction that offsets structural asymmetry. Comparing exposure models also distinguishes two mechanisms: count exposure accumulates the contributions of seeded neighbours, whereas fraction exposure averages them and substantially reduces the structural disparity.

Decoupled Procedural Allocation (DPA) uses this characterisation to separate quota choice from seed selection. On observed graphs, DPA evaluates seed contributions to both group means. Tilted rankings balance disparity against total outcome, and within-group exchanges refine the selection. An exact envelope excludes quotas that cannot meet the outcome tolerance, while upper bounds quantify retained total outcome. A closed-form quota rule provides a model-based reference with a randomised fairness guarantee.

\par\noindent
\begin{minipage}{\columnwidth}
We make three contributions.
\begin{itemize}
\item For graph-independent seeding, we derive an exact decomposition into allocation and structural terms, with a sharp continuous first-moment floor under an access tolerance and related identities for multiple groups and fraction exposure.
\item We develop closed-form quotas and observed-graph seed selection within quotas, with exact gap envelopes excluding infeasible quotas and total-outcome certificates for feasible solutions.
\item Experiments on synthetic and Web graphs, including a prospective suite fixed before evaluation and million-user Pokec, compare recent seed-level methods and exact integer-programming references to assess quotas, seed composition, and exposure.
\end{itemize}
\end{minipage}
\par

\section{Related Work}
\label{sec:related}

\paragraph{Fair influence maximization.}
Maximising aggregate reach \citep{kempe2003} can underserve structurally disadvantaged groups \citep{fish2019gaps,stoica2020biased}. Network topology also shapes disparities in information access \citep{jalali2023fairness}. FIM addresses outcome equity through maximin objectives \citep{tsang2019}, group welfare \citep{rahmattalabi2021welfare}, time-critical coverage \citep{ali2023fairness}, randomised policies \citep{becker2021randomization}, and scalable optimisation \citep{feng2023scale,rui2025tkde,chen2025sfimco}. Recent methods optimise multiple group objectives through linear programming (LP) \citep{spaeh2025multiobjective} or aggregate group-specific seed lists \citep{rui2025maximin}. The equality formulation of \citet{farnadi2020unifying} matches seed shares to population shares. \citet{becker2023cost} studies the cost of equal group coverage. \citet{stoica2020biased} analyse diversity-aware degree thresholds on biased preferential-attachment networks using a one-hop diffusion approximation. We derive an access-constrained first-moment floor for graph-independent seeding under additive spillovers on balanced stochastic block networks and compare quota-level and seed-level objectives under one outcome model.

\paragraph{Network interference.}
Causal inference under interference distinguishes direct and spillover effects \citep{hudgens2008toward}. Exposure mappings relate outcomes to neighbours' treatments \citep{aronow2017aoas,forastiere2021identification}. Graph cluster randomisation designs interventions around network exposure \citep{ugander2013graph}. Linear peer-effects models provide a complementary parameterisation \citep{manski1993reflection,bramoulle2009identification,bramoulle2020survey}, and recent work examines interference misspecification \citep{weinstein2026misspecified}. Our linear exposure model gives an explicit fairness decomposition from planner-supplied direct and spillover coefficients. Count and fraction exposure yield different structural coefficients, making propagation part of allocation.

\paragraph{Procedural and distributive fairness.}
Procedural justice concerns decision processes, while distributive justice concerns their benefits \citep{greenberg1987}. Algorithmic treatments include feature-selection criteria \citep{grgic2018beyond} and procedural-unfairness measures \citep{wang2024procedural}. In network seed allocation, equal per-capita treatment is one operational measure of access. We characterise how this access criterion relates to group outcomes under network spillovers.

\section{Problem Formulation}
\label{sec:formulation}

\paragraph{Population and allocation.}
Let $\mathcal G=(V,E)$ be a simple undirected graph on $n$ vertices, partitioned into nonempty groups $V_1,\ldots,V_G$ of sizes $n_g$. A seed set $S$ has fixed size $0\leq k\leq n$ and induces treatment indicators $T_i=\mathds{1}[i\in S]$. Write $N_i$ for the neighbours of $i$, $d_i=|N_i|$, and $d_g(i)=|N_i\cap V_g|$. The group allocation rates satisfy
\begin{equation}
\pi_g(S)=\frac{|S\cap V_g|}{n_g},\qquad \sum_g n_g\pi_g(S)=k.
\label{eq:pi-g}
\end{equation}
We develop the two-group case, $V_A\sqcup V_B=V$, and extend the decomposition to $G$ groups.

\paragraph{Outcomes under interference.}
We use the additive neighbourhood-exposure model
\begin{equation}
Y_i(S)=\alpha+\beta T_i+\gamma\sum_{j\in N_i}T_j+\xi_i,
\label{eq:outcome-model}
\end{equation}
where $\xi_i$ is outcome noise with $\E[\xi_i\mid\mathcal G,S]=0$, and $\alpha$ is a common baseline. Becoming a seed adds the direct gain $\beta>0$ to a user's score, and each seeded neighbour adds the spillover gain $\gamma\geq0$. The planner specifies these gains. Outcomes are scores and can exceed one. We study fraction exposure and independent-cascade (IC) activation separately to assess how the representation of spillover affects the allocation trade-off.

The group mean outcome and total outcome are
\begin{align}
\mu_g(S)&=\frac{1}{n_g}\sum_{i\in V_g}\E[Y_i(S)\mid\mathcal G,S],\label{eq:mu-g}\\
\sigma(S)&=n\alpha+\beta k+\gamma\sum_{i\in S}d_i.\label{eq:spread}
\end{align}
For a fixed graph and seed set, both are available in closed form. In the SBM analysis, $\E$ additionally averages over the graph and any independent seed randomisation.

\paragraph{Two fairness gaps.}
We measure differences in access and outcomes by
\begin{align}
\Delta_T(S)&=|\pi_A(S)-\pi_B(S)|,\label{eq:delta-T}\\
\Delta_Y(S)&=|\mu_A(S)-\mu_B(S)|.\label{eq:delta-Y}
\end{align}
For $G>2$, each gap is the maximum over group pairs. The access tolerance $\tau\geq0$ bounds the difference in seed rates. For example, $\tau=0.1$ permits at most a ten-percentage-point difference between the two rates. A planner with this tolerance solves
\begin{equation}
\min_{S\subseteq V}\Delta_Y(S)
\quad\text{subject to}\quad |S|=k,\quad\Delta_T(S)\leq\tau.
\label{eq:problem}
\end{equation}
To select efficiently among allocations with approximately equal group outcomes, we also study
\begin{equation}
\begin{aligned}
&\max_{S\subseteq V}\sigma(S)\\
&\text{subject to}\quad |S|=k,\quad\Delta_T(S)\leq\tau,\quad\Delta_Y(S)\leq\epsilon.
\end{aligned}
\label{eq:problem-band}
\end{equation}
The outcome tolerance $\epsilon\geq0$ sets the planner's parity resolution. We use $\epsilon_0=\beta/(4\max_g n_g)$ and evaluate its sensitivity experimentally. If the band is infeasible, Eq.~\eqref{eq:problem} defines the minimum-gap objective, with total outcome breaking ties. If no integer quota meets the access tolerance, we report the minimum attainable access gap and a quota attaining it. The corresponding continuous rate set is
\begin{equation}
\Pi(k,\tau)=\{\bm\pi\in[0,1]^2:n_A\pi_A+n_B\pi_B=k,\ |\pi_A-\pi_B|\leq\tau\}.
\label{eq:feasible-rates}
\end{equation} Population analysis assumes graph-independent allocation, while observed-graph selection uses exact group contributions.

\section{Structural Characterisation}
\label{sec:theory}

Consider a two-block SBM with within-group edge probabilities $p_A,p_B$ and cross-group probability $q$. For balanced groups, write $n_A=n_B=m=n/2$ and $\mathcal H=p_A-p_B\geq0$. Seed identities and their fixed group counts are independent of the sampled graph. This includes a fixed seed set or uniform sampling within each group.

\begin{proposition}[No spillover]
\label{prop:sutva}
When $\gamma=0$, every seed set satisfies $\Delta_Y(S)=\beta\Delta_T(S)$.
\end{proposition}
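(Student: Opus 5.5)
The plan is to compute the group means directly from the outcome model \eqref{eq:outcome-model} with $\gamma=0$ and read off the gap. Setting $\gamma=0$ removes the neighbourhood sum, so $Y_i(S)=\alpha+\beta T_i+\xi_i$. Taking the conditional expectation given $(\mathcal G,S)$ and using $\E[\xi_i\mid\mathcal G,S]=0$ leaves $\E[Y_i(S)\mid\mathcal G,S]=\alpha+\beta T_i$. This holds for every realised graph and every seed set, so no SBM or independence assumption is needed. The identity is therefore a pointwise statement about an arbitrary fixed $S$.

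Next, average over a group using \eqref{eq:mu-g}:
\[
\mu_g(S)=\frac{1}{n_g}\sum_{i\in V_g}(\alpha+\beta T_i)=\alpha+\beta\,\frac{|S\cap V_g|}{n_g}=\alpha+\beta\,\pi_g(S),
\]
by the definition \eqref{eq:pi-g} of the allocation rate. Subtracting the two groups' means cancels the common baseline $\alpha$, giving $\mu_A(S)-\mu_B(S)=\beta\bigl(\pi_A(S)-\pi_B(S)\bigr)$. Taking absolute values and using $\beta>0$ gives $\Delta_Y(S)=\beta\,\Delta_T(S)$ by \eqref{eq:delta-T}--\eqref{eq:delta-Y}.

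For the $G$-group convention, where each gap is a maximum over pairs, the same pairwise identity $|\mu_g-\mu_h|=\beta|\pi_g-\pi_h|$ holds for every pair. Because $\beta>0$, multiplying by $\beta$ preserves the ordering of the pairs, so the maximum over pairs commutes with scaling and the identity carries over. If $\E$ is read as also averaging over a random graph or seed randomisation, the identity is linear in $T_i$. Taking expectations then gives the same relation between the expected signed differences.

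There is no real obstacle here. The only point to watch is that the baseline $\alpha$ is common to both groups, which is what makes it cancel, and that the sign of $\beta$ is what lets the absolute value factor out.
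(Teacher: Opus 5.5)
Your proof is correct and is essentially the paper's argument. The paper states this proposition without a displayed proof, but the intended route is this direct computation. With $\gamma=0$ each group mean reduces to $\alpha+\beta\pi_g(S)$, the common baseline cancels, and $\beta>0$ lets the absolute value factor out, including under the pairwise-maximum convention for $G>2$ groups.
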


\begin{theorem}[Allocation--structure decomposition]
\label{thm:upper}
For graph-independent seeding with fixed group rates,
\begin{equation}
\E[\mu_A-\mu_B]=\kappa_1(\pi_A-\pi_B)+\kappa_2\pi_B,
\label{eq:decomp-thm}
\end{equation}
where $\kappa_1=\beta+\gamma[(n_A-1)p_A-n_Aq]$ and
\begin{equation}
\kappa_2=\gamma[(n_A-1)p_A-(n_B-1)p_B+(n_B-n_A)q].
\label{eq:structural-coefficient}
\end{equation}
For balanced groups, $\kappa_2=\gamma(m-1)\mathcal H$. If $D=\mu_A-\mu_B$ and $r_n=\E|D-\E D|$, then
\begin{equation}
|\E D|\leq\E\Delta_Y\leq|\E D|+r_n,
\qquad r_n=O(\gamma\sqrt{p_{\max}}),
\label{eq:fluctuation}
\end{equation}
in the balanced case, where $p_{\max}=\max\{p_A,p_B,q\}$.
\end{theorem}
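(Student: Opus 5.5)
Plan: compute $\E[\mu_g]$ directly from the outcome model \eqref{eq:outcome-model} by linearity, using graph-independence of the seed set. Write $s_g=|S\cap V_g|=n_g\pi_g$. For $i\in V_A$, $\E[Y_i\mid S]=\alpha+\beta T_i+\gamma\sum_{j\neq i}T_j\Pr[ij\in E]$. The probability is $p_A$ for $j\in V_A$ and $q$ for $j\in V_B$. Hence $\E[Y_i\mid S]=\alpha+\beta T_i+\gamma[p_A(s_A-T_i)+q s_B]$. Averaging over $i\in V_A$ gives
\[
\E\mu_A=\alpha+\beta\pi_A+\gamma[p_A(s_A-\pi_A)+q s_B]=\alpha+\beta\pi_A+\gamma[(n_A-1)p_A\pi_A+qn_B\pi_B],
\]
and symmetrically $\E\mu_B=\alpha+\beta\pi_B+\gamma[(n_B-1)p_B\pi_B+qn_A\pi_A]$. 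If seeds are sampled uniformly within groups rather than fixed, I condition on the counts first; only the counts enter, so the result is unchanged.

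Next I subtract and regroup. The $\pi_A$ coefficient is $\beta+\gamma[(n_A-1)p_A-n_Aq]=\kappa_1$, and the $\pi_B$ coefficient is $-\beta+\gamma[n_Bq-(n_B-1)p_B]$. Writing $\kappa_1\pi_A+c_B\pi_B=\kappa_1(\pi_A-\pi_B)+(\kappa_1+c_B)\pi_B$ gives
\[
\kappa_1+c_B=\gamma[(n_A-1)p_A-(n_B-1)p_B+(n_B-n_A)q]=\kappa_2,
\]
as claimed. Setting $n_A=n_B=m$ gives $\kappa_2=\gamma(m-1)\mathcal H$.

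For \eqref{eq:fluctuation}, I first note that $\Delta_Y=|D|$, with $\mu_g$ conditional on the graph (noise averaged out). The lower bound $|\E D|\le\E|D|$ is Jensen's inequality. The upper bound follows from the triangle inequality, $|D|\le|\E D|+|D-\E D|$.

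The main work is the rate for $r_n$. I bound $r_n\le\sqrt{\operatorname{Var}D}$. Given $S$, the only randomness is in the edges: $D-\E D=\gamma\sum_{\{i,j\}}(\text{coef}_{ij})(e_{ij}-p_{ij})$. Each edge indicator $e_{ij}$ enters $\mu_A$ or $\mu_B$ with weight at most $2/m$, arising from $i$ counting seeded neighbour $j$ and vice versa. Only edges with at least one seeded endpoint contribute, and there are at most $k n\le n^2$ such pairs. Independence of the edges gives
\[
\operatorname{Var}D\le\gamma^2\cdot O(n^2)\cdot(4/m^2)\,p_{\max}=O(\gamma^2p_{\max}).
\]
Hence $r_n=O(\gamma\sqrt{p_{\max}})$.

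If seeds are randomised within groups, I add the variance from the seed draw via the law of total variance. The conditional mean depends on $S$ only through the counts, which are fixed, so this term vanishes. The obstacle I expect is the bookkeeping of edge weights when both endpoints are seeds, where an edge counts twice and possibly in different groups. The constant absorbs this, since each coefficient is bounded by $2\gamma/m$ in absolute value.
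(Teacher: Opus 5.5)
Your proposal is correct and takes essentially the same route as the paper. Linearity over the independent SBM edges gives $\E\mu_g$; this is exactly the two-block case of Eq.~\eqref{eq:multigroup}, and regrouping the $\pi_B$ coefficient yields $\kappa_2$. Jensen's inequality and the triangle inequality give the sandwich, and bounding $r_n\le\sqrt{\operatorname{Var}D}$ using at most $kn$ independent edge indicators, each weighted by at most $2\gamma/m$, gives $r_n\le4\gamma\sqrt{p_{\max}}$. One wording fix: condition on $S$ rather than on the counts. Since $\E[D\mid S]$ depends on $S$ only through the fixed counts, your law-of-total-variance step then covers any graph-independent seed distribution with those counts, not only uniform within-group sampling.
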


The allocation component in Eq.~\eqref{eq:decomp-thm} changes with the rate difference. The structural component persists at equal positive rates whenever $\kappa_2\ne0$. Equation~\eqref{eq:fluctuation} connects this signed first moment to the absolute gap used for evaluation. At fixed densities, a positive seed fraction, and nonzero $\gamma\mathcal H$, the parity term grows with $n$, whereas the fluctuation bound stays bounded. Under $p_{\max}=O(\log n/n)$, the fluctuation bound itself vanishes. For the rate profile $\bm\pi=(\pi_A,\pi_B)$, define $\Phi(\bm\pi)=|\kappa_1(\pi_A-\pi_B)+\kappa_2\pi_B|$ as the continuous extension of the first-moment gap.

\begin{theorem}[Sharp first-moment floor]
\label{thm:lower}
For balanced groups, define $\bar\pi=k/n$, $c=\kappa_1-\kappa_2/2$, and $\tau_{\mathrm{eff}}=\min\{\tau,2\bar\pi,2(1-\bar\pi)\}$. The continuous minimum of the first-moment gap is
\begin{equation}
\mathcal L(\tau)=\min_{\bm\pi\in\Pi(k,\tau)}\Phi(\bm\pi)
=\bigl[\kappa_2\bar\pi-|c|\tau_{\mathrm{eff}}\bigr]_+,
\label{eq:floor}
\end{equation}
where $[x]_+=\max\{x,0\}$ and $\Pi(k,\tau)$ satisfies budget, capacity, and access constraints. Every graph-independent allocation with $\Delta_T\leq\tau$ satisfies $\E\Delta_Y\geq\mathcal L(\tau)$.
\end{theorem}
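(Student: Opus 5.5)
We reduce the problem to one dimension by parametrising $\Pi(k,\tau)$ by the rate difference. With balanced groups the budget constraint $m\pi_A+m\pi_B=k$ reads $\pi_A+\pi_B=2\bar\pi$. Write $\delta=\pi_A-\pi_B$, so that $\pi_A=\bar\pi+\delta/2$ and $\pi_B=\bar\pi-\delta/2$. The capacity constraints $\pi_A,\pi_B\in[0,1]$ become $|\delta|\leq 2\bar\pi$ and $|\delta|\leq 2(1-\bar\pi)$. Together with the access constraint $|\delta|\leq\tau$, the feasible set is exactly the interval $|\delta|\leq\tau_{\mathrm{eff}}$.

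Substituting into the decomposition of \cref{thm:upper} gives
\[
\kappa_1\delta+\kappa_2(\bar\pi-\delta/2)=\kappa_2\bar\pi+c\,\delta .
\]
Hence $\Phi$ restricted to $\Pi(k,\tau)$ is the absolute value of an affine function of $\delta$ on the symmetric interval $[-\tau_{\mathrm{eff}},\tau_{\mathrm{eff}}]$.

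Next we minimise $|a+c\delta|$ over $|\delta|\leq t$, where $a=\kappa_2\bar\pi$ and $t=\tau_{\mathrm{eff}}$. By the balanced-case formula, $\kappa_2=\gamma(m-1)\mathcal H\geq0$, since $\gamma\geq0$ and $\mathcal H\geq 0$; thus $a\geq0$. The range of $a+c\delta$ over the interval is $[a-|c|t,\,a+|c|t]$, which contains $0$ exactly when $a\leq|c|t$. Otherwise the smallest absolute value is $a-|c|t$, attained at $\delta=-\operatorname{sgn}(c)\,t$. So the minimum equals $[a-|c|t]_+$, which is \eqref{eq:floor}. The endpoint attainment shows the floor is sharp over the continuous set. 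When $c=0$ the expression correctly reduces to $a$. The minimiser also gives the offsetting allocation direction mentioned in the text.

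Finally we transfer the floor to actual allocations. Take a graph-independent allocation with $\Delta_T\leq\tau$ and realised integer counts. Its rate profile $\bm\pi=(|S\cap V_A|/m,|S\cap V_B|/m)$ satisfies the budget and capacity constraints automatically, and the access constraint by hypothesis. Therefore $\bm\pi\in\Pi(k,\tau)$, and the left inequality of \eqref{eq:fluctuation} gives $\E\Delta_Y\geq|\E D|=\Phi(\bm\pi)\geq\mathcal L(\tau)$.

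The main subtlety is seeding that randomises the group counts. \Cref{thm:upper} is stated for fixed group rates, so in that case we condition on the counts. By Jensen's inequality, $\E\Delta_Y\geq\E\bigl[|\E[D\mid \text{counts}]|\bigr]$. Each conditional term is at least $\mathcal L(\tau)$, because every realised profile lies in $\Pi(k,\tau)$ under the almost-sure access constraint. So the bound survives averaging.

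We expect the only real care to lie in this conditioning step and in confirming that the sign convention $\mathcal H\geq0$ makes $\kappa_2\bar\pi$ nonnegative. Without it, the formula would need $|\kappa_2|\bar\pi$.
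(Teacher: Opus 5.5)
Your proposal is correct and follows essentially the same route as the paper. You parametrise $\Pi(k,\tau)$ by $\delta=\pi_A-\pi_B$ with $|\delta|\le\tau_{\mathrm{eff}}$, rewrite the first moment as $\kappa_2\bar\pi+c\delta$, take the distance of $[\kappa_2\bar\pi-|c|\tau_{\mathrm{eff}},\,\kappa_2\bar\pi+|c|\tau_{\mathrm{eff}}]$ from zero, and conclude with Jensen. Your extra conditioning step for randomised group counts is valid, but the paper does not need it because its setup fixes the group counts independently of the graph.
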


\begin{proof}
Write $\delta=\pi_A-\pi_B$. The budget gives $\pi_B=\bar\pi-\delta/2$, while capacity and tolerance give $|\delta|\leq\tau_{\mathrm{eff}}$. Equation~\eqref{eq:decomp-thm} yields $\E D=\kappa_2\bar\pi+c\delta$. Its range is centred at $\kappa_2\bar\pi\geq0$ with radius $|c|\tau_{\mathrm{eff}}$, so its distance from zero gives Eq.~\eqref{eq:floor}. Jensen's inequality gives $\E\Delta_Y\geq|\E D|\geq\mathcal L(\tau)$.
\end{proof}

When $c>0$, offsetting the structural term requires a higher seed rate in $B$. When $c<0$, stronger cross-group transmission reverses this direction. At parity, $\mathcal L(0)=\gamma(m-1)\mathcal H\bar\pi$.

The asymmetry $\mathcal H$ determines the parity gap, whereas the slope of the allocation response depends on average within-group connectivity relative to cross-group connectivity:
\begin{equation}
c=\beta+\gamma\left[\frac{m-1}{2}(p_A+p_B)-mq\right].
\label{eq:mixing-slope}
\end{equation}
The same density asymmetry can require opposite allocation corrections. For $c\ne0$, cancelling the structural term requires $|\pi_A-\pi_B|=\kappa_2\bar\pi/|c|$, subject to tolerance and capacity. If $c=0$, reallocation leaves the first moment unchanged.

\begin{proposition}[Fraction exposure]
\label{prop:normalized}
Replace the count exposure in Eq.~\eqref{eq:outcome-model} by $d_i^{-1}\sum_{j\in N_i}T_j$, assigning zero exposure to isolated vertices. Let $\tilde\mu_g$ be the resulting group mean. Under uniform within-group seeding,
\begin{equation}
\E[\tilde\mu_A-\tilde\mu_B]
=\tilde\kappa_1(\pi_A-\pi_B)+\tilde\kappa_2\pi_B,
\label{eq:normalized-decomp}
\end{equation}
where $\tilde\kappa_1=\beta+\gamma(\theta_A-\theta_B)$. Here $\theta_A$ and $\theta_B$ are the expected fractions of $A$-neighbours seen from an $A$ vertex and a $B$ vertex, with zero assigned at isolation. The isolation coefficient is $\tilde\kappa_2=\gamma[\Pr(d_i=0\mid i\in V_B)-\Pr(d_i=0\mid i\in V_A)]$. For balanced groups with $m\geq2$,
\begin{equation}
0\leq\tilde\kappa_2\leq\gamma e^{-\bar d_B},
\qquad \bar d_B=(m-1)p_B+mq.
\label{eq:normalized-bound}
\end{equation}
Thus the structural coefficient vanishes as the sparser group's expected degree diverges.
\end{proposition}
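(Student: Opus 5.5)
The plan is to compute each group mean directly, using linearity of the fraction exposure in the treatment indicators together with independence between the seed set and the sampled graph. For a vertex $i$, write $\tilde E_i=d_i^{-1}\sum_{j\in N_i}T_j$ (set to $0$ if $d_i=0$). Then
\[
\E[\tilde\mu_g]=\alpha+\beta\pi_g+\frac{\gamma}{m}\sum_{i\in V_g}\E[\tilde E_i],
\]
because $\E[T_i]=\pi_g$ for $i\in V_g$ under uniform within-group seeding and the noise has mean zero. The key observation is that $\tilde E_i=\sum_{j\neq i}T_j\,\mathds{1}[j\in N_i]/d_i$ is linear in the $T_j$. Since seeds are independent of the graph,
\[
\E[\tilde E_i]=\sum_{j\ne i}\E[T_j]\;\E\bigl[\mathds{1}[j\in N_i]/d_i\bigr].
\]
Only the marginals $\E[T_j]=\pi_{g(j)}$ enter. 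The dependence created by sampling without replacement within a group, including between $T_i$ and $T_j$, therefore never matters. I expect this to be the main point needing care, and linearity settles it without any coupling argument.

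Grouping the sum by the group of $j$ gives, for $i\in V_g$,
\[
\E[\tilde E_i]=\pi_A\,\E\bigl[d_A(i)/d_i\bigr]+\pi_B\,\E\bigl[d_B(i)/d_i\bigr],
\]
with ratios taken as zero at isolation. By exchangeability within blocks of the SBM, $\E[d_A(i)/d_i]$ depends only on the group of $i$; call it $\theta_A$ for $i\in V_A$ and $\theta_B$ for $i\in V_B$, matching the statement. For a non-isolated vertex the two fractions sum to one. Writing $z_g=\Pr(d_i=0\mid i\in V_g)$, this gives $\E[d_B(i)/d_i]=1-z_g-\theta_g$. Hence the group exposures are
\[
\pi_A\theta_A+\pi_B(1-z_A-\theta_A)\quad\text{and}\quad \pi_A\theta_B+\pi_B(1-z_B-\theta_B).
\]
Their difference is $(\theta_A-\theta_B)(\pi_A-\pi_B)+(z_B-z_A)\pi_B$. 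Adding the direct term $\beta(\pi_A-\pi_B)$ yields Eq.~\eqref{eq:normalized-decomp} with the stated $\tilde\kappa_1$ and $\tilde\kappa_2=\gamma(z_B-z_A)$.

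For the bound \eqref{eq:normalized-bound} with balanced groups, independence of edges gives
\[
z_g=(1-p_g)^{m-1}(1-q)^m.
\]
Since $\mathcal H=p_A-p_B\ge0$ and $m\ge2$ (so the exponent $m-1$ is positive), we get $z_A\le z_B$, which gives $\tilde\kappa_2\ge0$. Using $1-x\le e^{-x}$ gives $z_B\le e^{-(m-1)p_B-mq}=e^{-\bar d_B}$, and then $\tilde\kappa_2\le\gamma z_B\le\gamma e^{-\bar d_B}$. The final sentence of the proposition follows immediately, since $e^{-\bar d_B}\to0$ as $\bar d_B\to\infty$.
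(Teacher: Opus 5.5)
Your proof is correct and takes essentially the paper's approach, which it signals right after the statement: linearity of the fraction exposure and graph-independent seeding give expected exposure $\pi_A\theta_g+\pi_B(1-z_g-\theta_g)$ in group $g$, so only the isolation probabilities survive in the $\pi_B$ coefficient, and the product form of $z_g$ with $1-x\le e^{-x}$ yields Eq.~\eqref{eq:normalized-bound}. The one slip is cosmetic: the decomposition is claimed for arbitrary group sizes, so your first display should carry $\gamma/n_g$ rather than $\gamma/m$, which changes nothing because $\E[\tilde E_i]$ is constant on $V_g$.
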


Under fraction exposure, equal rates give equal expected exposure at every non-isolated vertex. Isolation alone carries the structural difference. The absolute gap also includes graph fluctuations, as in Eq.~\eqref{eq:fluctuation}.

\begin{proposition}[Multiple groups]
\label{prop:multigroup}
For a $G$-block SBM with within-block probabilities $p_g$, symmetric cross-block probabilities $q_{gf}$, and graph-independent seeding at fixed group rates,
\begin{equation}
\E\mu_g=\alpha+\beta\pi_g+\gamma\left[(n_g-1)p_g\pi_g+\sum_{f\ne g}n_fq_{gf}\pi_f\right].
\label{eq:multigroup}
\end{equation}
For a pair $(g,h)$, a third block contributes $\gamma n_f(q_{gf}-q_{hf})\pi_f$ to the difference. Uniform cross-block probabilities remove these third-block terms.
\end{proposition}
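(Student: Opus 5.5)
The plan is to compute $\E\mu_g$ by linearity of expectation, conditioning first on the seed set and then averaging over the SBM graph. This mirrors the two-block computation behind \cref{thm:upper}.

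Fix $i\in V_g$ and start from Eq.~\eqref{eq:outcome-model}. Since $\E[\xi_i\mid\mathcal G,S]=0$, the conditional mean is
\[
\E[Y_i\mid\mathcal G,S]=\alpha+\beta T_i+\gamma\sum_{j\ne i}\mathds{1}[\{i,j\}\in E]\,T_j .
\]
Seed identities are independent of the graph, so $\E[\mathds{1}[\{i,j\}\in E]\,T_j]=\Pr(\{i,j\}\in E)\,\E T_j$. This equals $p_g\E T_j$ for $j\in V_g\setminus\{i\}$ and $q_{gf}\E T_j$ for $j\in V_f$ with $f\ne g$.

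Next I average over $i\in V_g$ and use the fixed group counts. For the direct term, $\frac1{n_g}\sum_{i\in V_g}T_i=\pi_g$ exactly.

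For the cross-block spillover term, the double sum $\frac1{n_g}\sum_{i\in V_g}\sum_{j\in V_f}T_j$ equals $n_f\pi_f$ for any seed set with the given counts, so it contributes $\gamma n_f q_{gf}\pi_f$.

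For the within-block term I need $\sum_{i\in V_g}\sum_{j\in V_g,j\ne i}T_j=(n_g-1)|S\cap V_g|=(n_g-1)n_g\pi_g$, since each seed in $V_g$ is counted once for every other member of the block. This gives $\gamma(n_g-1)p_g\pi_g$ and completes Eq.~\eqref{eq:multigroup}.

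This within-block counting step is the only subtle point. One must exclude self-loops, because the graph is simple, and the $(n_g-1)$ factor then comes from exact counting rather than from uniformity of seeding. So the identity holds for any graph-independent allocation with fixed counts, which matches the hypothesis.

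For the pairwise statement, I subtract Eq.~\eqref{eq:multigroup} for $g$ and $h$. The terms involving a third block $f\notin\{g,h\}$ appear as $\gamma n_f q_{gf}\pi_f$ and $\gamma n_f q_{hf}\pi_f$, so their difference is $\gamma n_f(q_{gf}-q_{hf})\pi_f$.

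If $q_{gf}=q$ for all $g\ne f$, this difference vanishes. The remaining expression then involves only $\pi_g$ and $\pi_h$. It reduces to \cref{thm:upper}'s form for the pair, using the symmetry $q_{gh}=q_{hg}$.
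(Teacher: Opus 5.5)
Your proposal is correct and follows essentially the same route as the paper: linearity of expectation, independence of the seed set from the SBM graph, and exact within-block pair counting for the $(n_g-1)p_g\pi_g$ term. Subtracting the identities for $g$ and $h$ then isolates the third-block terms $\gamma n_f(q_{gf}-q_{hf})\pi_f$, and your check that uniform $q$ recovers the $\kappa_1,\kappa_2$ form of \cref{thm:upper} is also right.
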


\section{Decoupled Procedural Allocation}
\label{sec:algorithm}

Model-based DPA variants derive quotas from the structural decomposition. Observed-graph selection uses exact group contributions and quota-preserving exchanges, as illustrated in \Cref{fig:overview}.

\begin{figure*}[t]
\centering
\includegraphics[width=0.90\textwidth]{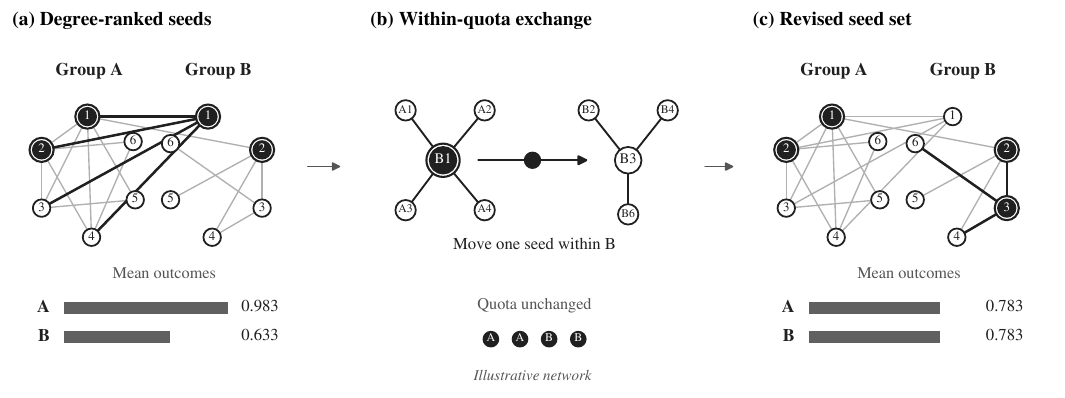}
\Description{An illustrative two-group graph shows degree-ranked seeds and an exchange within the same group quota. The exchanged seed changes the receiving groups of spillovers, and paired bars compare group means before and after the exchange.}
\caption{A within-quota exchange on an illustrative 12-node graph. Four seeds are split equally between the groups. Each seed adds one unit to itself and $0.3$ units to each neighbour, with zero baseline. Replacing $B_1$ with $B_3$ closes the outcome gap while retaining $96.91\%$ of total outcome. Rings mark seeds, and bold edges highlight the exchanged nodes' neighbours.}
\label{fig:overview}
\end{figure*}

\subsection{Closed-Form Quotas}
\label{sec:alg-stage1}

Given coefficient estimates $(\hat\kappa_1,\hat\kappa_2)$, the closed-form quota rule minimises the predicted gap over the continuous rates in Eq.~\eqref{eq:feasible-rates}.
Eliminating $\pi_A$ yields $|a-b\pi_B|$, where $a=\hat\kappa_1k/n_A$ and $b=\hat\kappa_1n/n_A-\hat\kappa_2$.

\begin{proposition}[Closed-form solution]
\label{prop:stage1}
Let the capacity- and tolerance-feasible interval be
\begin{align}
\ell&=\max\left\{0,\frac{k-n_A\tau}{n},\frac{k-n_A}{n_B}\right\},\nonumber\\
u&=\min\left\{1,\frac{k+n_A\tau}{n},\frac{k}{n_B}\right\}.
\label{eq:stage1-interval}
\end{align}
If $b\ne0$, the solution is $\pi_B^\star=\operatorname{proj}_{[\ell,u]}(a/b)$. If $b=0$, every feasible rate is optimal. In both cases $\pi_A^\star=(k-n_B\pi_B^\star)/n_A$.
\end{proposition}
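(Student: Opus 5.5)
The plan is to reduce the two-variable problem over $\Pi(k,\tau)$ to a one-dimensional minimisation in $\pi_B$, and then minimise an absolute value of an affine function over an interval. First I substitute the budget constraint $n_A\pi_A+n_B\pi_B=k$, i.e.\ $\pi_A=(k-n_B\pi_B)/n_A$, into the predicted gap $|\hat\kappa_1(\pi_A-\pi_B)+\hat\kappa_2\pi_B|$. Using $n_A+n_B=n$, this gives
\[
\hat\kappa_1\pi_A-\hat\kappa_1\pi_B+\hat\kappa_2\pi_B=\frac{\hat\kappa_1 k}{n_A}-\Bigl(\frac{\hat\kappa_1 n}{n_A}-\hat\kappa_2\Bigr)\pi_B=a-b\pi_B ,
\]
which confirms the stated reduction. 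Because the map $\pi_B\mapsto(\pi_A,\pi_B)$ is a bijection onto the budget line, minimising over $\Pi(k,\tau)$ is the same as minimising $|a-b\pi_B|$ over the set of $\pi_B$ whose induced pair satisfies the remaining constraints.

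Next I translate each remaining constraint into bounds on $\pi_B$ and check that together they give exactly $[\ell,u]$. The constraint $\pi_B\in[0,1]$ contributes the endpoints $0$ and $1$. The constraint $\pi_A\ge0$ is $\pi_B\le k/n_B$, and $\pi_A\le1$ is $\pi_B\ge(k-n_A)/n_B$. For the access constraint, I compute
\[
\pi_A-\pi_B=\frac{k-n_B\pi_B-n_A\pi_B}{n_A}=\frac{k-n\pi_B}{n_A}.
\]
So $|\pi_A-\pi_B|\le\tau$ is equivalent to $(k-n_A\tau)/n\le\pi_B\le(k+n_A\tau)/n$. Intersecting all these bounds gives $\max$ of the lower bounds and $\min$ of the upper bounds, which is Eq.~\eqref{eq:stage1-interval}. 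I will assume, implicitly as in the statement, that $\ell\le u$, so that the feasible set is nonempty; the formulation handles the case where no rate meets the tolerance separately.

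Then I minimise $f(\pi_B)=|a-b\pi_B|$ on $[\ell,u]$. If $b=0$, then $f\equiv|a|$ is constant, so every feasible rate is optimal. If $b\ne0$, then $f(\pi_B)=|b|\,|\pi_B-a/b|$, which is a positive multiple of the distance from $\pi_B$ to the point $a/b$. The minimiser of distance to a point over a closed interval is the projection onto that interval, so $\pi_B^\star=\operatorname{proj}_{[\ell,u]}(a/b)$, and this minimiser is unique. Finally, $\pi_A^\star=(k-n_B\pi_B^\star)/n_A$ follows from the budget constraint.

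The main obstacle is the constraint bookkeeping: showing that the access and capacity constraints on $\pi_A$ map exactly onto the listed endpoints, with the factor $n_A$ rather than $n$ multiplying $\tau$, and that no constraint is lost. Everything else is elementary convex analysis in one dimension.
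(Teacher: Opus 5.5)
Your proof is correct and follows the paper's route: eliminate $\pi_A$ through the budget to get $|a-b\pi_B|$, translate the capacity and access constraints into exactly $[\ell,u]$, and minimise $|b|\,|\pi_B-a/b|$ by projecting onto that interval. You also do not need to assume $\ell\le u$, because the equal-rate point $\pi_B=k/n$ always lies in $[\ell,u]$ when $0\le k\le n$ and $\tau\ge0$, so the continuous feasible set is never empty.
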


Estimating the structural coefficients from observed group edge densities takes $O(|E|)$ time. The oracle variant uses the generating SBM parameters. Coefficients estimated on the evaluation graph define an adaptive policy whose performance we measure empirically.

\subsection{Within-Group Selection and Guarantees}
\label{sec:alg-stage2}

\textsc{DPA-r} draws the prescribed number of seeds uniformly without replacement within each group. \textsc{DPA-g} chooses the highest-degree nodes. Equation~\eqref{eq:spread} gives each seed's total-outcome contribution as $\beta+\gamma d_i$, so degree ranking maximises $\sigma$ exactly at fixed quotas. Degree-ranked seeds provide the starting point for observed-graph selection.

Convert continuous quotas to integers by $k_A^\star=\lceil n_A\pi_A^\star\rceil$ and $k_B^\star=k-k_A^\star$.

\begin{lemma}[Rounding]
\label{lem:rounding}
The rounded profile $\bm\pi^{\mathrm{int}}$ obeys $\Delta_T\leq\tau+1/n_A+1/n_B$ and
\begin{equation}
|\Phi(\bm\pi^{\mathrm{int}})-\Phi(\bm\pi^\star)|\leq B_n,
\quad B_n=|\kappa_1|\left(\frac1{n_A}+\frac1{n_B}\right)+\frac{|\kappa_2|}{n_B}.
\label{eq:rounding}
\end{equation}
\end{lemma}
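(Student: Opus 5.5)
The plan is to track how far each coordinate of the profile moves under rounding, and then transfer those bounds to the access gap and to $\Phi$ by the triangle inequality. Write $k_A^\star=\lceil n_A\pi_A^\star\rceil$, so that $\pi_A^{\mathrm{int}}=k_A^\star/n_A$. Since the ceiling moves $n_A\pi_A^\star$ up by less than one,
\[
0\le \pi_A^{\mathrm{int}}-\pi_A^\star<\frac{1}{n_A}.
\]
The budget is preserved exactly, because $n_A\pi_A^{\mathrm{int}}+n_B\pi_B^{\mathrm{int}}=k_A^\star+k_B^\star=k$. Combining this with $n_A\pi_A^\star+n_B\pi_B^\star=k$ gives
\[
n_B(\pi_B^\star-\pi_B^{\mathrm{int}})=n_A(\pi_A^{\mathrm{int}}-\pi_A^\star)\in[0,1),
\]
so $0\le\pi_B^\star-\pi_B^{\mathrm{int}}<1/n_B$. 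Rounding therefore raises $\pi_A$ by less than $1/n_A$ and lowers $\pi_B$ by less than $1/n_B$.

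For the access gap, set $\delta^\star=\pi_A^\star-\pi_B^\star$ and $\delta^{\mathrm{int}}=\pi_A^{\mathrm{int}}-\pi_B^{\mathrm{int}}$. These differ by $(\pi_A^{\mathrm{int}}-\pi_A^\star)+(\pi_B^\star-\pi_B^{\mathrm{int}})$, which lies in $[0,1/n_A+1/n_B)$. Proposition~\ref{prop:stage1} places $\pi_B^\star\in[\ell,u]$, and the tolerance entries of Eq.~\eqref{eq:stage1-interval} give $|\delta^\star|\le\tau$. Hence $\Delta_T(\bm\pi^{\mathrm{int}})=|\delta^{\mathrm{int}}|\le\tau+1/n_A+1/n_B$.

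For $\Phi$, write $\Phi(\bm\pi)=|L(\bm\pi)|$ with the linear form $L(\bm\pi)=\kappa_1(\pi_A-\pi_B)+\kappa_2\pi_B$. The reverse triangle inequality gives
\[
|\Phi(\bm\pi^{\mathrm{int}})-\Phi(\bm\pi^\star)|\le|L(\bm\pi^{\mathrm{int}})-L(\bm\pi^\star)|\le|\kappa_1|\,|\delta^{\mathrm{int}}-\delta^\star|+|\kappa_2|\,|\pi_B^{\mathrm{int}}-\pi_B^\star|.
\]
The two coordinate bounds above turn the right-hand side into at most $|\kappa_1|(1/n_A+1/n_B)+|\kappa_2|/n_B=B_n$, which is Eq.~\eqref{eq:rounding}.

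The only point needing care is feasibility of the rounded quotas: we need $k_A^\star\le n_A$ and $0\le k_B^\star\le n_B$. The first holds because $\pi_A^\star\le1$, which gives $n_A\pi_A^\star\le n_A$ and hence $\lceil n_A\pi_A^\star\rceil\le n_A$. The second holds because $\pi_B^{\mathrm{int}}\le\pi_B^\star\le1$, and because $k_B^\star=k-k_A^\star\ge0$: $k_A^\star\le k$ follows from $n_A\pi_A^\star\le k$ and the integrality of $k$. The bounds above are stated for the given rounding rule and do not require $\bm\pi^{\mathrm{int}}$ to lie in $\Pi(k,\tau)$.
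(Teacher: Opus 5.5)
Your proof is correct and takes essentially the same route as the paper: the ceiling raises $\pi_A$ by less than $1/n_A$, the exactly preserved budget lowers $\pi_B$ by less than $1/n_B$, the tolerance endpoints of $[\ell,u]$ give $|\delta^\star|\le\tau$, and the triangle inequality applied to $\kappa_1\delta+\kappa_2\pi_B$ produces exactly the two terms of $B_n$. Your additional check that the rounded quotas satisfy $0\le k_A^\star\le\min\{k,n_A\}$ and $k_B^\star\le n_B$ is a useful detail that the statement leaves implicit.
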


\begin{theorem}[Randomised allocation guarantee]
\label{thm:taa}
Let the coefficient estimates be fixed independently of the evaluated graph and satisfy $|\hat\kappa_i-\kappa_i|\leq\eta_i$. For balanced groups, \textsc{DPA-r} satisfies
\begin{equation}
\E\Delta_Y(S^{\mathrm{DPA-r}})
\leq\mathcal L(\tau)+2(\eta_1\tau+\eta_2)+B_n+r_n.
\label{eq:guarantee}
\end{equation}
The returned allocation's access disparity satisfies the rounding bound in \Cref{lem:rounding}. Here $r_n$ is evaluated at the returned allocation profile. With oracle coefficients, the first-moment floor is attained up to rounding, and the expected absolute gap adds at most $r_n$.
\end{theorem}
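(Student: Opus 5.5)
The proof chains three comparisons: the expected absolute gap to the first-moment gap $\Phi$ at the returned integer profile, that to $\Phi$ at the continuous estimated optimum $\hat{\bm\pi}^\star$ of \Cref{prop:stage1}, and that to the true floor $\mathcal L(\tau)$ of \Cref{thm:lower}.

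\emph{Step 1 (fluctuation).} \textsc{DPA-r} draws seeds uniformly within each group at fixed integer counts. The coefficients are fixed independently of the evaluated graph, so the quotas, and hence the whole allocation, are graph-independent. \Cref{thm:upper} therefore applies to the returned profile $\bm\pi^{\mathrm{int}}$. Its upper inequality in Eq.~\eqref{eq:fluctuation} gives $\E\Delta_Y\le\Phi(\bm\pi^{\mathrm{int}})+r_n$, with $r_n$ evaluated at $\bm\pi^{\mathrm{int}}$, since $|\E D|=\Phi(\bm\pi^{\mathrm{int}})$ by Eq.~\eqref{eq:decomp-thm}.

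\emph{Step 2 (rounding).} \Cref{lem:rounding} gives $\Phi(\bm\pi^{\mathrm{int}})\le\Phi(\hat{\bm\pi}^\star)+B_n$, together with the stated access bound. Here $\hat{\bm\pi}^\star$ is the continuous solution computed from the estimates. The lemma bounds $\Phi$ with true coefficients under a perturbation of at most one seed per group, so it applies to any continuous profile that is rounded.

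\emph{Step 3 (estimation).} Let $\hat\Phi$ denote $\Phi$ with $(\hat\kappa_1,\hat\kappa_2)$ in place of $(\kappa_1,\kappa_2)$. Fix $\bm\pi\in\Pi(k,\tau)$ and, in the balanced case, write $\delta=\pi_A-\pi_B$ and $\pi_B=\bar\pi-\delta/2$. Then $|\Phi(\bm\pi)-\hat\Phi(\bm\pi)|\le\eta_1|\delta|+\eta_2\pi_B$. This is at most $\eta_1\tau+\eta_2$, using $|\delta|\le\tau$ and $\pi_B\le1$. By \Cref{prop:stage1}, $\hat{\bm\pi}^\star$ minimises $\hat\Phi$ over $\Pi(k,\tau)$: $\hat\Phi$ is a convex function of $\pi_B$ on the interval $[\ell,u]$, and projecting its zero onto that interval gives the minimiser. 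Let $\bm\pi^\dagger$ attain $\mathcal L(\tau)$. The standard sandwich then gives
\[
\Phi(\hat{\bm\pi}^\star)\le\hat\Phi(\hat{\bm\pi}^\star)+(\eta_1\tau+\eta_2)\le\hat\Phi(\bm\pi^\dagger)+(\eta_1\tau+\eta_2)\le\mathcal L(\tau)+2(\eta_1\tau+\eta_2).
\]
Chaining Steps 1--3 yields Eq.~\eqref{eq:guarantee}. With oracle coefficients, $\eta_1=\eta_2=0$, so $\Phi(\bm\pi^{\mathrm{int}})\le\mathcal L(\tau)+B_n$ and $\E\Delta_Y$ adds at most $r_n$, which is the final claim.

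\emph{Main obstacle.} The care lies in Step 3's feasible-set bookkeeping. The set $\Pi(k,\tau)$ in \Cref{prop:stage1} is the same set over which $\mathcal L$ is defined via $\tau_{\mathrm{eff}}$, so the bound $|\delta|\le\tau_{\mathrm{eff}}\le\tau$ is valid. The degenerate case $b=0$, where any feasible rate is optimal for $\hat\Phi$, is also fine, since the sandwich only uses optimality of $\hat{\bm\pi}^\star$ for $\hat\Phi$. Step 1 additionally requires that the fluctuation bound of \Cref{thm:upper} holds for fixed integer counts, and the balanced-SBM assumption supplies this.
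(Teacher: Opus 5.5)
Your proposal is correct and follows essentially the route the bound itself encodes; the paper does not print this proof, so the comparison is against that structure. The chain is: the triangle inequality $\E|D|\le|\E D|+r_n$ at the returned profile, which is graph-independent precisely because the estimates are fixed independently of the evaluated graph; the rounding lemma, costing $B_n$ relative to the continuous estimated optimum; and the two-sided sandwich $|\Phi-\hat\Phi|\le\eta_1\tau+\eta_2$ on $\Pi(k,\tau)$, which yields the factor $2(\eta_1\tau+\eta_2)$. The only point you left implicit is that $\Pi(k,\tau)$ is nonempty and compact for balanced groups (it always contains $\pi_A=\pi_B=\bar\pi$), so the minimiser $\bm\pi^\dagger$ exists and $[\ell,u]$ is a genuine interval.
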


Degree ranking maximises fixed-quota total outcome because seed contributions add, with fairness evaluated on the realised graph.

\subsection{Within-Quota Selection on an Observed Graph}
\label{sec:alg-dpax}

Seed $v$ adds $a_g(v)=(\beta\mathds{1}[v\in V_g]+\gamma d_g(v))/n_g$ to group $g$'s mean. Its signed contribution $w_v=a_A(v)-a_B(v)$ to the outcome difference is
\begin{equation}
w_v=\beta\left(\frac{\mathds{1}[v\in V_A]}{n_A}-\frac{\mathds{1}[v\in V_B]}{n_B}\right)
+\gamma\left(\frac{d_A(v)}{n_A}-\frac{d_B(v)}{n_B}\right).
\label{eq:node-contribution}
\end{equation}
Thus $D(S)=\sum_{v\in S}w_v$ and $\sigma(S)=n\alpha+\sum_{v\in S}c_v$, where $c_v=\beta+\gamma d_v$. Let $J(\tau)$ contain integer $A$-quotas $j=|S\cap V_A|$ meeting capacity and access constraints. If this set is empty, recompute $J$ at the minimum attainable access gap. Fixed degree rankings define the prefix set $S(j)$, containing the first $j$ nodes in $A$ and the first $k-j$ in $B$. A degree-prefix sweep minimises $|D(S(j))|$ over this family in $O(k)$ time after $O(n\log n+|E|)$ preprocessing. Every prefix maximises total outcome for its quota. \textsc{DPA-e} breaks minimum-gap ties by the smallest $j$.

A quota change replaces the last selected $B$ node $b$ with the next $A$ node $a$, giving
\begin{align}
D(S(j+1))-D(S(j))&=w_a-w_b,\label{eq:prefix-gap-exchange}\\
\sigma(S(j+1))-\sigma(S(j))&=\gamma(d_a-d_b).\label{eq:prefix-spread-exchange}
\end{align}
A within-group exchange preserves $j$ and changes $D$ by incoming minus outgoing contribution. Degree measures aggregate spillover, whereas $w_v$ records its receiving groups, allowing \textsc{DPA} to search beyond degree prefixes.

\begin{proposition}[Within-quota envelope]
\label{prop:envelope}
For quota $j$, let $\ell_j$ sum the $j$ smallest values of $w_v$ in $A$ and the $k-j$ smallest in $B$, and let $u_j$ sum the corresponding largest values. Every seed set at this quota satisfies $D(S)\in[\ell_j,u_j]$. Hence every access-feasible seed set satisfies
\begin{equation}
\Delta_Y(S)\geq\min_{j\in J(\tau)}\operatorname{dist}(0,[\ell_j,u_j]).
\label{eq:envelope-floor}
\end{equation}
A quota cannot reach the outcome band if its interval misses $[-\epsilon,\epsilon]$. When every interval excludes zero, extreme contribution prefixes attain the bound.
\end{proposition}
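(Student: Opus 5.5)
The argument rests on the additive representation $D(S)=\sum_{v\in S}w_v$ from Eq.~\eqref{eq:node-contribution}, which holds exactly for a fixed graph and seed set because $\mu_A-\mu_B$ is linear in the treatment indicators and $\alpha$ cancels. Fix a quota $j$, so that $S=S_A\sqcup S_B$ with $|S_A|=j$ and $|S_B|=k-j$. Then $D(S)=\sum_{v\in S_A}w_v+\sum_{v\in S_B}w_v$. Each sum of $j$ distinct values from the multiset $\{w_v:v\in V_A\}$ lies between the sum of its $j$ smallest and its $j$ largest elements, by the usual sorted-sum exchange argument. The same holds in $B$ with $k-j$ terms. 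Adding the two groups gives $D(S)\in[\ell_j,u_j]$.

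For the global floor, I will take any access-feasible $S$. Its $A$-count $j=|S\cap V_A|$ satisfies capacity and $\Delta_T\le\tau$, so $j\in J(\tau)$. When $J(\tau)$ was recomputed at the minimum attainable access gap, the same statement holds for seed sets meeting that gap. Since $\Delta_Y(S)=|D(S)|$ and $D(S)\in[\ell_j,u_j]$, we get $|D(S)|\ge\operatorname{dist}(0,[\ell_j,u_j])$, and taking the minimum over $j\in J(\tau)$ gives Eq.~\eqref{eq:envelope-floor}.

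The band statement follows from the same containment. If $[\ell_j,u_j]\cap[-\epsilon,\epsilon]=\emptyset$, then every $S$ at quota $j$ has $|D(S)|>\epsilon$, so no seed set at that quota can meet the band.

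For attainment, I assume every interval excludes zero. Take a minimising quota $j^\ast$ and suppose $\ell_{j^\ast}>0$, so the distance equals $\ell_{j^\ast}$. Choosing the $j^\ast$ nodes in $A$ with smallest $w_v$ and the $k-j^\ast$ nodes in $B$ with smallest $w_v$ gives a valid seed set: these are distinct vertices, and capacity is guaranteed by $j^\ast\in J(\tau)$. This set has $D=\ell_{j^\ast}$, hence $\Delta_Y=\ell_{j^\ast}$. The case $u_{j^\ast}<0$ is symmetric, using the largest-contribution prefixes. So the bound is attained by extreme contribution prefixes and is tight.

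The main care point is bookkeeping rather than mathematics: ties among $w_v$ (any tie-break gives the same sum), and the fallback definition of $J$ when $J(\tau)$ is empty, where the floor must be read over sets attaining the minimal access gap. No step requires more than the linearity of $D$ and sorting.
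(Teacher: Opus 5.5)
Your proposal is correct and takes essentially the same route as the paper's argument. The exact additive form $D(S)=\sum_{v\in S}w_v$ gives per-group sorted-sum bounds that add to $[\ell_j,u_j]$, the membership $|S\cap V_A|\in J(\tau)$ yields both the floor and the band exclusion, and the smallest- or largest-contribution prefixes at a minimising quota attain the bound when every interval excludes zero; your remarks on ties and on the fallback definition of $J$ are appropriate bookkeeping.
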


For a quota whose interval meets the band and a tilt $\nu\geq0$, rank nodes within each group by $c_v-\nu w_v$. Denote the resulting quota-constrained prefix by $S_j(\nu)$.

\begin{lemma}[Tilted rankings and total-outcome bound]
\label{lem:tilt}
For $\nu\geq0$, $S_j(\nu)$ maximises $\sigma(S)-\nu D(S)$ at quota $j$, and its signed gap is non-increasing in $\nu$. Every set at that quota with $D(S)\leq\epsilon$ satisfies
\begin{equation}
\sigma(S)\leq U_j(\nu):=\sigma(S_j(\nu))-\nu[D(S_j(\nu))-\epsilon].
\label{eq:certificate}
\end{equation}
The symmetric construction uses $-w_v$. A basic solution of the fixed-quota linear relaxation of Eq.~\eqref{eq:problem-band} has at most two fractional coordinates, both in the same group.
\end{lemma}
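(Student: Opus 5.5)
The lemma has four parts, and I would prove them in order: optimality of the tilted prefix, monotonicity of the gap in $\nu$, the Lagrangian upper bound, and the vertex structure of the relaxation.

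\emph{Optimality of the tilted prefix.} I will use the additive forms $D(S)=\sum_{v\in S}w_v$ and $\sigma(S)=n\alpha+\sum_{v\in S}c_v$. These give
\[
\sigma(S)-\nu D(S)=n\alpha+\sum_{v\in S}(c_v-\nu w_v).
\]
At quota $j$ the feasible sets are exactly the products of a $j$-subset of $V_A$ with a $(k-j)$-subset of $V_B$. The objective therefore separates into an independent sum over each group. Each sum is maximised by taking the largest $j$ (respectively $k-j$) values of $c_v-\nu w_v$, which is the prefix $S_j(\nu)$. Ties are broken by a fixed rule, e.g.\ by index.

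\emph{Monotonicity in $\nu$.} I will use the standard exchange argument. Take $\nu<\nu'$ and write $S=S_j(\nu)$, $S'=S_j(\nu')$. Optimality of each set at its own tilt gives
\[
\sigma(S)-\nu D(S)\ge\sigma(S')-\nu D(S'),\qquad
\sigma(S')-\nu' D(S')\ge\sigma(S)-\nu' D(S).
\]
Adding the two inequalities yields $(\nu'-\nu)\bigl(D(S)-D(S')\bigr)\ge0$, so $D(S')\le D(S)$. This needs no tie-breaking consistency, because only the optimality of each set is used.

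\emph{The certificate $U_j(\nu)$.} Take any $S$ at quota $j$ with $D(S)\le\epsilon$, and use $\nu\ge0$. Then
\[
\sigma(S)\le\sigma(S)-\nu\bigl(D(S)-\epsilon\bigr)=\bigl[\sigma(S)-\nu D(S)\bigr]+\nu\epsilon\le\sigma(S_j(\nu))-\nu D(S_j(\nu))+\nu\epsilon=U_j(\nu).
\]
The symmetric construction for the constraint $D(S)\ge-\epsilon$ follows by replacing $w_v$ with $-w_v$ throughout.

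\emph{Fractional coordinates of a basic solution.} I read the fixed-quota relaxation as follows: variables $x\in[0,1]^{V}$, group equalities $\sum_{A}x_v=j$ and $\sum_{B}x_v=k-j$, and band constraints $-\epsilon\le\sum_v w_vx_v\le\epsilon$. A basic feasible solution in $\mathbb R^n$ has $n$ linearly independent tight constraints. At most three of these are non-bound constraints: the two equalities and at most one side of the band, since $\epsilon>0$ makes both sides tight at once impossible. Hence at least $n-3$ coordinates sit at a bound.

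The count must be improved from three to two, and I expect this to be the main obstacle. Suppose exactly one coordinate in a group were fractional. Because the group's equality has an integer right-hand side and all other coordinates are $0$ or $1$, that coordinate would be integer, a contradiction. So each group has either zero or at least two fractional coordinates. Three fractional coordinates would then require two in one group and one in the other, which is impossible. Hence there are at most two fractional coordinates, and when there are two they lie in the same group.

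The case $\epsilon=0$ needs care, since then both band sides can be tight. Here I will count the two band sides as a single equality row. Both are the same hyperplane $\sum_v w_vx_v=0$, so they contribute only one linearly independent row, and the bound of three non-bound rows still holds.
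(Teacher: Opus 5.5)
Your first three parts are correct. Separability over the two groups gives optimality of $S_j(\nu)$. Adding the two optimality inequalities gives $D(S_j(\nu'))\le D(S_j(\nu))$ without any tie-breaking assumption. The weak-duality chain gives $U_j(\nu)$.

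The gap is in the final step, which you flagged as the main obstacle. Your rank count gives at most three fractional coordinates, and parity rules out a group containing exactly one. But a basic solution with all three fractional coordinates in the \emph{same} group satisfies your condition that each group has zero or at least two. Your claim that three fractional coordinates would need a two-and-one split therefore skips this case, and you have not excluded a three-and-zero split.

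The missing idea is a rank count that uses where the fractional coordinates lie. Let $F$ be the set of fractional coordinates and $m_F$ the number of groups meeting $F$. For $v\notin F$ exactly one bound is tight, so the tight bound rows $\pm e_v$ have rank $n-|F|$. The equality row $\mathbf 1_{V_g}$ of a group disjoint from $F$ lies in the span of these rows. The remaining tight rows (the equality rows of the groups meeting $F$, plus one band row) therefore add at most $m_F+1$ to the rank. A basic solution needs $n\le n-|F|+m_F+1$, that is, $|F|\le m_F+1$.

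If both groups meet $F$, parity gives $|F|\ge4>3$, a contradiction. If exactly one group meets $F$, then $|F|\le2$, and parity forces $|F|=2$ inside that group. This closes the proof. It also covers $\epsilon=0$ under your convention of counting the two band sides as a single row.
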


Zero tilt recovers degree selection. Bisection brackets a band crossing, and within-group exchanges repair and improve the candidate. Ranked seed sets change discretely with the tilt, so each returned set is verified from its summed contributions.

\begin{algorithm}[t]
\caption{\textsc{DPA} on an observed graph}
\label{alg:dpax}
\begin{algorithmic}[1]
\Require Graph, groups, budget $k$, tolerances $\tau,\epsilon$, effects $\beta,\gamma$
\State Compute $w_v$, $c_v$, $J(\tau)$ and the envelopes.
\State Initialise candidates with degree and extreme-contribution prefixes.
\For{each evaluated quota $j$}
\If{$[\ell_j,u_j]$ meets $[-\epsilon,\epsilon]$}
\State Bracket the band crossing by tilted rankings and record the bound in Eq.~\eqref{eq:certificate}.
\State Apply within-group repair and improvement swaps, retaining verified in-band candidates.
\Else
\State Record the minimum-gap extreme prefix at quota $j$.
\EndIf
\EndFor
\State \Return the highest-outcome in-band candidate, or the smallest-gap evaluated candidate if none is in band.
\end{algorithmic}
\end{algorithm}

Algorithm~\ref{alg:dpax} searches quotas and returns a verified candidate, prioritising total outcome within the band. Computing contributions costs $O(n+|E|)$, and each tilt uses $O(n)$ partial sorting. Let $U_j$ be a valid bound at every access-feasible quota. For an in-band output $\hat S$ and the optimum $\sigma^\star_{\tau,\epsilon}$ of Eq.~\eqref{eq:problem-band},
\begin{equation}
\sigma(\hat S)\leq\sigma^\star_{\tau,\epsilon}\leq\max_{j\in J(\tau)}U_j.
\label{eq:global-certificate}
\end{equation}
A full bound scan certifies the allocation search as well as the seed selection. The Pokec evaluation includes this scan separately from selection time.

\section{Experiments}
\label{sec:experiments}

In this section, we first describe the setup and baselines, test the structural floor, and compare methods. Then, we examine outcome tolerance and exposure. Finally, we study independent-cascade diffusion, runtime, and allocation on Pokec. Synthetic method comparisons use ten SBM or five BA graphs per suite and report mean outcome gaps and sample standard deviations.

\subsection{Setup}
\label{sec:exp-setup}

\paragraph{Graphs.}
\textbf{SBM-Synth} uses balanced groups, $p_A=0.06$, $q=0.005$, $p_B\in[0.005,0.040]$, and $n\in\{400,1000,2000\}$. Method comparisons use $n=1000$ and $p_A-p_B=0.04$. \textbf{BA-Synth} joins two Barab\'asi--Albert (BA) graphs with attachment parameters $4$ and $1$ and cross-edge probability $0.002$. \textbf{BlogCatalog} and \textbf{Flickr} use preprocessed releases \citep{guo2020networked} of the original networks \citep{tang2009relational}, retaining their two largest labelled communities, while \textbf{Amherst41} \citep{traud2012social} and \textbf{Pokec} \citep{takac2012data} use recorded gender. \Cref{tab:datasets} lists their sizes. Development uses ten SBM graphs, five BA graphs, BlogCatalog, Flickr, and Pokec. Prospective evaluation uses ten further SBM graphs, five further BA graphs, and Amherst41, specified in the pre-evaluation protocol. The bound study uses ten graphs per setting, and other synthetic studies use five.

\begin{table}[t]
\centering
\caption{Graph and group sizes for method comparisons, with grouping attributes.}
\label{tab:datasets}
\footnotesize
\setlength{\tabcolsep}{4pt}
\begin{tabular}{lrrl}
\toprule
Dataset & $n_A$ & $n_B$ & Grouping \\
\midrule
SBM-Synth & $500$ & $500$ & Block \\
BA-Synth & $500$ & $500$ & Barab\'asi--Albert component \\
BlogCatalog & $892$ & $921$ & Community \\
Flickr & $883$ & $888$ & Community \\
Amherst41 & $1\,015$ & $1\,017$ & Profile gender \\
Pokec & $828\,301$ & $804\,335$ & Profile gender \\
\bottomrule
\end{tabular}

\end{table}

\paragraph{Protocol.}
Unless stated otherwise, the baseline outcome is zero, each seed adds one unit to its own outcome and $0.3$ units to each neighbour, $k=\lfloor0.1n\rfloor$, and $\epsilon=\epsilon_0$. Method comparisons use double-precision closed-form totals. Synthetic summaries give means and sample standard deviations across graphs. \textsc{LP-Greedy} uses five algorithm seeds on each fixed graph. All methods share the graph, budget, and outcome model. Real-network outcomes are modelled scores.

\begin{table*}[t]
\centering
\caption{Results with access feasibility assessed at $\tau=0.1$. Gaps show means $\pm$ sample standard deviations under the aggregation in Section~\ref{sec:exp-setup}. Bold and underlining mark the smallest and second-smallest mean gaps among configurations feasible in every run. Daggers indicate an access violation in at least one run. Welfare uses log utility (inequality aversion $\lambda=1$). DPA-r uses one fixed draw on each fixed graph.}
\label{tab:q2}
\footnotesize
\setlength{\tabcolsep}{3pt}
\begin{tabular}{l@{\quad}rrr@{\quad}rrr@{\quad}rrr@{\quad}rrr}
\toprule
& \multicolumn{3}{c}{SBM-Synth} & \multicolumn{3}{c}{BA-Synth} & \multicolumn{3}{c}{BlogCatalog} & \multicolumn{3}{c}{Flickr} \\
Method & $\Delta_T$ & $\Delta_Y$ & $\sigma$ & $\Delta_T$ & $\Delta_Y$ & $\sigma$ & $\Delta_T$ & $\Delta_Y$ & $\sigma$ & $\Delta_T$ & $\Delta_Y$ & $\sigma$ \\
\midrule
Greedy-IM & 0.200$^{\dagger}$ & $2.229{\pm}0.039$ & 1311 & 0.166$^{\dagger}$ & $1.060{\pm}0.025$ & 716 & 0.009 & $0.547$ & 6836 & 0.013 & $0.742$ & 5161 \\
Proportional & 0.000 & $0.732{\pm}0.031$ & 1020 & 0.000 & $0.567{\pm}0.015$ & 643 & 0.000 & $0.402$ & 6835 & 0.001 & $0.636$ & 5154 \\
Maximin & 0.094 & $0.013{\pm}0.008$ & 850 & 0.149$^{\dagger}$ & $0.006{\pm}0.004$ & 468 & 0.031 & $0.006$ & 6802 & 0.086 & $0.007$ & 4887 \\
Welfare & 0.027 & $0.929{\pm}0.058$ & 1065 & 0.019 & $0.610{\pm}0.043$ & 654 & 0.000 & $0.402$ & 6835 & 0.003 & $0.614$ & 5152 \\
\midrule
FIMM & 0.129$^{\dagger}$ & $1.586{\pm}0.062$ & 1211 & 0.091 & $0.793{\pm}0.018$ & 692 & 0.002 & $0.372$ & 6834 & 0.008 & $0.698$ & 5160 \\
LP-Greedy & 0.078 & $0.009{\pm}0.008$ & 865 & 0.130$^{\dagger}$ & $0.010{\pm}0.006$ & 485 & 0.020 & $0.004{\pm}0.004$ & 6813 & 0.064 & $\underline{0.005{\pm}0.000}$ & 4942 \\
IGM-AGM & 0.090 & $\underline{0.005{\pm}0.003}$ & 842 & 0.136$^{\dagger}$ & $0.007{\pm}0.006$ & 481 & 0.002 & $\underline{0.003}$ & 6645 & 0.073 & $0.015$ & 4918 \\
LP-Greedy-q & 0.078 & $0.009{\pm}0.008$ & 865 & 0.100 & $0.019{\pm}0.018$ & 484 & 0.020 & $0.004{\pm}0.004$ & 6813 & 0.064 & $\underline{0.005{\pm}0.000}$ & 4942 \\
IGM-AGM-q & 0.090 & $\underline{0.005{\pm}0.003}$ & 842 & 0.100 & $0.170{\pm}0.016$ & 536 & 0.002 & $\underline{0.003}$ & 6645 & 0.073 & $0.015$ & 4918 \\
\midrule
\textsc{DPA-g} & 0.094 & $0.023{\pm}0.013$ & 851 & 0.079 & $0.303{\pm}0.021$ & 571 & 0.000 & $0.402$ & 6835 & 0.050 & $0.266$ & 5056 \\
\textsc{DPA-r} & 0.094 & $0.021{\pm}0.019$ & 631 & 0.079 & $\underline{0.013{\pm}0.012}$ & 244 & 0.000 & $0.238$ & 2755 & 0.050 & $0.069$ & 1500 \\
\textsc{DPA-e} & 0.096 & $0.010{\pm}0.005$ & 846 & 0.100 & $0.223{\pm}0.018$ & 546 & 0.031 & $0.013$ & 6802 & 0.086 & $0.007$ & 4887 \\
\textsc{DPA} & 0.079 & $\mathbf{(2.8{\pm}1.7){\times}10^{-4}}$ & 863 & 0.100 & $\mathbf{(3.2{\pm}1.1){\times}10^{-4}}$ & 475 & 0.020 & $\mathbf{2.6{\times}10^{-4}}$ & 6812 & 0.066 & $\mathbf{6.4{\times}10^{-5}}$ & 4939 \\
\bottomrule
\end{tabular}

\end{table*}

\subsection{Compared Methods}
\label{sec:exp-baselines}

We compare nine baseline configurations spanning total outcome, access parity, and group-outcome fairness. \textsc{Greedy-IM} maximises total outcome \citep{kempe2003}. \textsc{Proportional} applies population-proportional quotas \citep{farnadi2020unifying}, with high-degree selection within each group. \textsc{Maximin} maximises the lower group mean \citep{tsang2019}. \textsc{Welfare} optimises isoelastic group welfare with inequality-aversion parameter $\lambda\in\{0,0.25,0.5,0.75,1\}$, using logarithmic welfare at $\lambda=1$ \citep{rahmattalabi2021welfare}. Maximin and Welfare enumerate group splits with degree-based selection.

We adapt three recent seed-level baselines to the common linear outcome model. The fair influence maximization algorithm (\textsc{FIMM}) of \citet{rui2025tkde} greedily maximises size-weighted welfare $\sum_g n_g\mu_g^{1/2}$. \textsc{LP-Greedy} samples from an LP over marginal group gains \citep{spaeh2025multiobjective}. Inner-group maximization (IGM) and across-group maximization (AGM) form \textsc{IGM-AGM}, which greedily merges group-specific seed lists to improve the lower group mean \citep{rui2025maximin}. The quota-constrained variants, \textsc{LP-Greedy-q} and \textsc{IGM-AGM-q}, admit selections with an access-feasible completion. \textsc{DPA-g} and \textsc{DPA-r} use quotas estimated from group edge densities and select degree-ranked or uniform random seeds, respectively. \textsc{DPA-e} searches degree prefixes across feasible quotas, isolating the effect of within-quota seed exchanges.

\subsection{Bound Tightness and Allocation Tolerance}
\label{sec:exp-q1}

\Cref{fig:q1} shows that randomised oracle allocation follows the structural prediction across heterogeneity levels at access parity. Its mean relative difference from the floor is $2.1\%$, $2.5\%$, and $0.3\%$ at $n=400$, $1000$, and $2000$.

Relaxing the access constraint reduces the residual. At $n=1000$ and $\mathcal H=0.04$, increasing $\tau$ from zero to $0.05$ lowers the randomised gap from $0.586$ to $0.291$, close to the decrease predicted by the affine decomposition. The continuous floor reaches zero near $\tau=0.096$, after which the absolute gap levels off near $0.028$, consistent with the fluctuation and rounding terms in Eq.~\eqref{eq:guarantee}.

\begin{figure*}[t]
\centering
\includegraphics[width=0.80\textwidth]{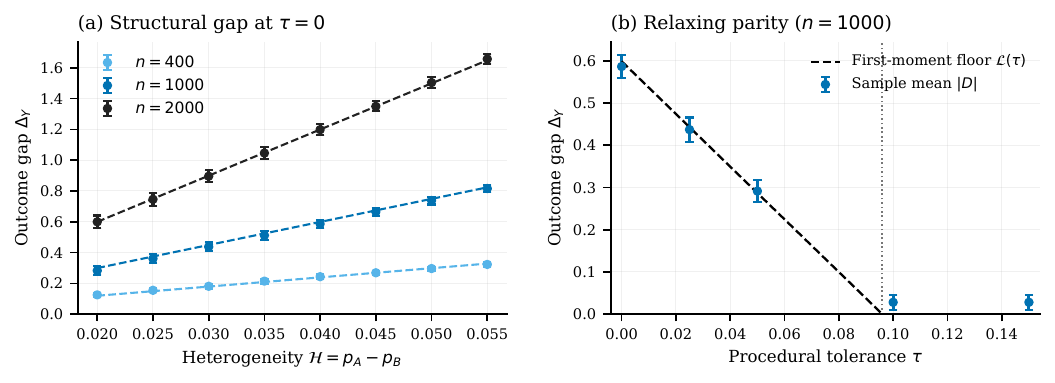}
\Description{Parity outcome gaps increase with density asymmetry, and relaxing the access tolerance lowers the gap towards the first-moment floor.}
\caption{Randomised oracle allocation follows the structural prediction across heterogeneity at parity (left) and access tolerance (right). Dashed lines show the continuous first-moment floor on the expected gap, the dotted line marks its zero threshold, and bars show sample standard deviations over ten graph and allocation realisations.}
\label{fig:q1}
\end{figure*}

\subsection{Fairness Trade-offs on Synthetic and Web Graphs}
\label{sec:exp-q2}

As \Cref{tab:q2} shows, DPA has the smallest mean outcome gap among configurations feasible in every run, at most $3.2\times10^{-4}$ on the four small development settings at $\tau=0.1$. \Cref{fig:q2} shows how the allocation rule changes the attainable trade-off as access tolerance varies.

\paragraph{Quota choice and seed composition.}
Within-quota selection reduces disparity beyond degree prefixes. On BA-Synth, \textsc{DPA-e} leaves a gap of $0.223$ at the common tolerance. \textsc{LP-Greedy-q} reduces it to $0.019$, and \textsc{DPA} reaches $3.2\times10^{-4}$ with $1.7\%$ less total outcome than \textsc{LP-Greedy-q}. On BlogCatalog, quotas based on average mixing remain near access parity. Selecting seeds by their empirical group contributions reduces the gap within the same access constraint.

\paragraph{The cost of parity.}
At the closest access-parity quota, \textsc{DPA} meets the outcome tolerance on BlogCatalog and Flickr while retaining more than $90\%$ of their degree allocations' total outcome. On BA-Synth, DPA reaches the band, retaining a mean $66.4\%$ of the degree allocation's total outcome. On SBM-Synth, every parity-quota envelope excludes zero, and the extreme prefixes attain the minimum gap at that quota, averaging $0.130$, below the first-moment floor $\mathcal L(0)=0.599$ for graph-independent seeding.

\paragraph{Prospective evaluation and exact references.}
\Cref{tab:dpax-prospective} shows the smallest access-feasible mean gap for DPA in all three prospective settings. On Amherst41, DPA nearly equalises modelled gender outcomes at the closest access-parity quota and retains $99.8\%$ of the degree-based allocation's total outcome. Across both suites, all 330 evaluations on small graphs meet their reference band, with total-outcome shortfalls below $0.41\%$ of the band-constrained integer optimum. Here the reference band is $\Delta_Y(S)\leq\max(\epsilon,\Delta_{\mathrm{ref}})+10^{-9}$, where $\Delta_{\mathrm{ref}}$ is the incumbent gap from disparity minimisation.

\begin{table}[t]
\centering
\caption{Prospective outcome gaps $\Delta_Y$ at $\tau=0.1$ for all thirteen configurations. Aggregation and rank markings follow \Cref{tab:q2}. Daggers flag access violations.}
\label{tab:dpax-prospective}
\footnotesize
\setlength{\tabcolsep}{2pt}
\begin{tabular}{lrrr}
\toprule
Method & SBM (new) & BA (new) & Amherst41 \\
\midrule
Greedy-IM & $2.255{\pm}0.035^{\dagger}$ & $1.032{\pm}0.039^{\dagger}$ & $0.258$ \\
Proportional & $0.737{\pm}0.025$ & $0.555{\pm}0.007$ & $0.218$ \\
Maximin & $0.013{\pm}0.008$ & $0.006{\pm}0.006^{\dagger}$ & $0.137$ \\
Welfare & $0.939{\pm}0.065$ & $0.611{\pm}0.030$ & $0.220$ \\
\midrule
FIMM & $1.576{\pm}0.060^{\dagger}$ & $0.776{\pm}0.018$ & $0.210$ \\
LP-Greedy & $0.010{\pm}0.008$ & $0.004{\pm}0.003^{\dagger}$ & $0.001{\pm}0.000$ \\
IGM-AGM & $0.009{\pm}0.006$ & $0.005{\pm}0.006^{\dagger}$ & $0.003$ \\
LP-Greedy-q & $0.010{\pm}0.008$ & $\underline{0.004{\pm}0.003}$ & $0.001{\pm}0.000$ \\
IGM-AGM-q & $0.009{\pm}0.006$ & $0.153{\pm}0.030$ & $0.003$ \\
\midrule
\textsc{DPA-g} & $0.026{\pm}0.020$ & $0.285{\pm}0.021$ & $0.015$ \\
\textsc{DPA-r} & $0.031{\pm}0.014$ & $0.010{\pm}0.008$ & $0.022$ \\
\textsc{DPA-e} & $\underline{0.007{\pm}0.003}$ & $0.204{\pm}0.020$ & $\underline{4.8{\times}10^{-4}}$ \\
\textsc{DPA} & $\mathbf{(2.2{\pm}1.5){\times}10^{-4}}$ & $\mathbf{(2.0{\pm}0.0){\times}10^{-4}}$ & $\mathbf{1.8{\times}10^{-4}}$ \\
\bottomrule
\end{tabular}

\end{table}

\begin{figure*}[t]
\centering
\includegraphics[width=0.88\textwidth]{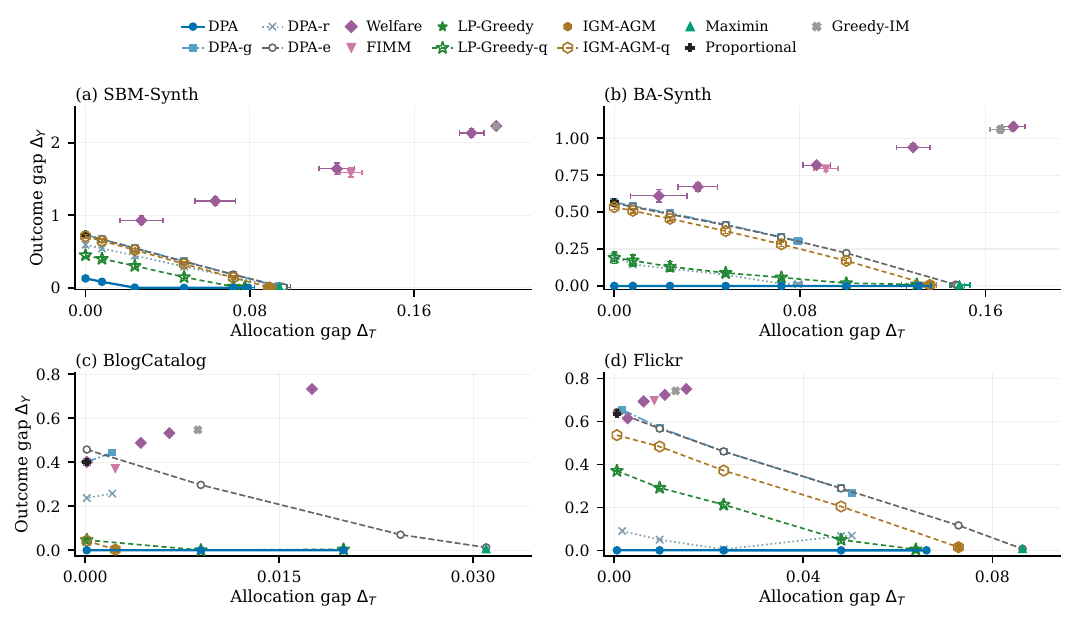}
\Description{Four panels compare the access and outcome gaps of DPA, the degree-prefix variant, and the baselines on the development graphs.}
\caption{Allocation--outcome trade-offs for all thirteen configurations. DPA variants vary $\tau$, and Welfare varies $\lambda$. Bars give one standard deviation across ten SBM or five BA graphs. Fixed-graph LP-Greedy variants average five algorithm seeds. Coincident curves overlap.}
\label{fig:q2}
\end{figure*}

\subsection{Outcome Tolerance and Exposure}
\label{sec:exp-q3}

\paragraph{Outcome tolerance.}
All 102 evaluations of $\epsilon\in\{10^{-4},\epsilon_0,2\epsilon_0\}$ across both suites satisfy their requested outcome bands. On the seven small-graph settings, tightening the tolerance to $10^{-4}$ reduces mean total outcome by at most $0.303\%$, while doubling $\epsilon_0$ increases it by at most $0.051\%$. On Pokec, $10^{-4}$ relaxes the default band and increases total outcome by $0.0009\%$.

\paragraph{Count versus fraction exposure.}
\label{sec:exp-q4}
For the same randomised seed sets, the count-based gap rises with $\mathcal H$, while the fraction-based gap remains around $10^{-3}$. The two community-labelled Web graphs show the same ordering under integer-rounded parity quotas. This ordering matches the isolation coefficient in \Cref{prop:normalized}.

\subsection{Access Constraints under Multi-Hop Diffusion}
\label{sec:exp-ic}

Under fixed within-group degree rankings, the access-constrained activation gaps meet the persistence criterion on SBM-Synth, BA-Synth, and Flickr at $p_{\mathrm{IC}}=0.05$, as \Cref{tab:ic-gate} shows. No setting meets it at $p_{\mathrm{IC}}=0.01$, so propagation strength changes the allocation--outcome trade-off. Here $p_{\mathrm{IC}}$ is the per-edge transmission probability, and quotas vary under fixed rankings to isolate access constraints under independent-cascade diffusion. Let $F_{\mathrm{bind}}$ and $F_{\mathrm{slack}}$ denote the minimum estimated activation gaps over the quota grid under $\Delta_T\leq0.01$ and $\Delta_T\leq0.5$, respectively. Synthetic entries average per-graph minima. The noise unit $s$ is the larger standard deviation of the two minima across graphs or cascade batches. Each seed set uses 2000 cascades, with a prespecified extension for unresolved cases.

\begin{table}[t]
\centering
\caption{IC gaps under fixed within-group degree rankings, with access tolerances $\tau=0.01$ for $F_{\mathrm{bind}}$ and $0.5$ for $F_{\mathrm{slack}}$. Persistence requires $F_{\mathrm{bind}}\geq F_{\mathrm{slack}}+s$, $F_{\mathrm{bind}}\geq1.5F_{\mathrm{slack}}$, and $F_{\mathrm{bind}}\geq0.03$. BlogCatalog uses the 4000-cascade extension.}
\label{tab:ic-gate}
\footnotesize
\begin{tabular}{l|cccl|c}
\toprule
& \multicolumn{4}{c|}{$p_{\mathrm{IC}}{=}0.05$} & $p_{\mathrm{IC}}{=}0.01$ \\
Graph & $F_{\mathrm{bind}}$ & $F_{\mathrm{slack}}$ & $s$ & verdict & $F_{\mathrm{bind}}$ \\
\midrule
SBM-Synth & 0.3858 & 0.2386 & 0.0081 & passes & 0.0209 \\
BA-Synth & 0.0706 & 0.0069 & 0.0029 & passes & 0.0060 \\
BlogCatalog & 0.0123 & 0.0072 & 0.0014 & below threshold & 0.0059 \\
Flickr & 0.1384 & 0.0918 & 0.0006 & passes & 0.0201 \\
\bottomrule
\end{tabular}

\end{table}

\subsection{Computational Scale and the Pokec Case Study}
\label{sec:exp-scale}

\paragraph{Runtime.}
The degree-prefix sweep averages $0.212$ s on one million sparse-SBM nodes by reusing contributions across quotas. \Cref{tab:dpax-runtime} compares all thirteen configurations. \textsc{DPA} takes $20.9$ s on Pokec, against $79.6$ s for \textsc{LP-Greedy}.

\begin{table}[t]
\centering
\caption{Selection time (s) includes preparation and excludes loading and shared degree computation. Small graphs use a $10\%$ budget and $\tau=0.1$, and Pokec $1\%$ and $\tau=0.01$. DPA-r$^{\ddagger}$ timings average five calls per graph after one warm-up. Other methods are timed once per algorithm run.}
\label{tab:dpax-runtime}
\footnotesize
\setlength{\tabcolsep}{2pt}
\begin{tabular}{lrrrrr}
\toprule
Method & SBM & BA & BlogCatalog & Flickr & Pokec \\
\midrule
Greedy-IM & 0.0001 & 0.00004 & 0.0001 & 0.0001 & 0.015 \\
Proportional & 0.0002 & 0.0001 & 0.0002 & 0.0002 & 0.032 \\
Maximin & 0.006 & 0.003 & 0.007 & 0.007 & 293.5 \\
Welfare & 0.005 & 0.003 & 0.009 & 0.008 & 390.8 \\
\midrule
FIMM & 0.002 & 0.001 & 0.003 & 0.002 & 208.8 \\
LP-Greedy & 0.035 & 0.024 & 0.086 & 0.048 & 79.6 \\
IGM-AGM & 0.002 & 0.001 & 0.002 & 0.002 & 1.57 \\
LP-Greedy-q & 0.037 & 0.031 & 0.082 & 0.053 & 158.2 \\
IGM-AGM-q & 0.002 & 0.001 & 0.002 & 0.002 & 91.9 \\
\midrule
\textsc{DPA-g} & 0.0004 & 0.0003 & 0.0009 & 0.0005 & 0.381 \\
\textsc{DPA-r}$^{\ddagger}$ & 0.0002 & 0.0001 & 0.0004 & 0.0003 & 0.288 \\
\textsc{DPA-e} & 0.0004 & 0.0003 & 0.0009 & 0.0005 & 0.637 \\
\textsc{DPA} & 0.190 & 0.371 & 2.42 & 1.78 & 20.9 \\
\bottomrule
\end{tabular}

\end{table}

\paragraph{Gender-based allocation on Pokec.}
\label{sec:case-study}
We symmetrise the friendship graph, remove users without recorded gender, and retain its largest connected component, with $1{,}632{,}636$ users and $22{,}301{,}602$ undirected edges. With a $1\%$ budget and $\gamma=0.3$, the closest-parity degree allocation has an outcome gap of $0.100$. \Cref{tab:pokec} shows that DPA at the same quota reduces it to $2.4\times10^{-7}$ while retaining $99.14\%$ of total outcome. LP-Greedy-q ranks second at this quota, retaining $99.06\%$ with a mean gap about two orders of magnitude larger than DPA's. At $\tau=0.01$, DPA retains $99.67\%$ and gives the smallest numerical gap among the compared methods. Scanning upper bounds over all 8161 feasible quotas certifies a total-outcome shortfall below $0.198$ outcome units.

The recorded DPA-r draw retains only $12.04\%$ of the degree allocation's total outcome at this same quota. With budget and group counts fixed, differences in total outcome follow entirely from selected-node degrees. On this graph, seed composition therefore controls aggregate benefit within the access constraint.

A small positive outcome tolerance permits substantially higher total benefit than exact outcome parity on this graph. At the experimental coefficients, coprime group sizes force equal group means onto a $0.1$ lattice, bounding total outcome by $91.4\%$ of the parity degree allocation.

\begin{table}[t]
\centering
\caption{Pokec with a $1\%$ budget: seven methods at the closest parity quota and twelve rows at $\tau=0.01$. Allocation gaps are in percentage points. Proportional supplies the retained-outcome reference in both blocks. Welfare, LP-Greedy uncertainty, and rank markings follow \Cref{tab:q2}. Daggers flag unrounded access violations.}
\label{tab:pokec}
\footnotesize
\begin{tabular}{lrrr}
\toprule
Method & $\Delta_T$ (pp) & $\Delta_Y$ & Retained $\sigma$ \\
\midrule
\multicolumn{4}{l}{Closest integer parity quota ($\tau=0$)} \\
Proportional & 0.00 & $0.100$ & 100.00\% \\
LP-Greedy-q & 0.00 & $\underline{(3.4{\pm}1.9){\times}10^{-5}}$ & 99.06\% \\
IGM-AGM-q & 0.00 & $0.045$ & 96.94\% \\
\textsc{DPA-g} & 0.00 & $0.100$ & 100.00\% \\
\textsc{DPA-r} & 0.00 & $0.006$ & 12.04\% \\
\textsc{DPA-e} & 0.00 & $0.099$ & 100.00\% \\
\textsc{DPA} & 0.00 & $\mathbf{2.4{\times}10^{-7}}$ & 99.14\% \\
\midrule
\multicolumn{4}{l}{Access tolerance $\tau=0.01$} \\
Greedy-IM & 0.13 & $0.083$ & 100.04\% \\
Welfare & 0.24 & $0.067$ & 100.01\% \\
FIMM & 0.16 & $0.069$ & 100.03\% \\
Maximin & 0.74 & $1.9{\times}10^{-5}$ & 99.00\% \\
LP-Greedy & 0.35 & $(2.6{\pm}4.4){\times}10^{-5}$ & 99.67\% \\
IGM-AGM & 0.41 & $\underline{7.5{\times}10^{-6}}$ & 96.70\% \\
LP-Greedy-q & 0.35 & $(2.6{\pm}4.4){\times}10^{-5}$ & 99.67\% \\
IGM-AGM-q & 0.41 & $\underline{7.5{\times}10^{-6}}$ & 96.70\% \\
\textsc{DPA-g} & 1.00$^{\dagger}$ & $0.040$ & 97.85\% \\
\textsc{DPA-r} & 1.00$^{\dagger}$ & $0.003$ & 12.24\% \\
\textsc{DPA-e} & 0.74 & $1.3{\times}10^{-5}$ & 99.01\% \\
\textsc{DPA} & 0.35 & $\mathbf{6.8{\times}10^{-8}}$ & 99.67\% \\
\bottomrule
\end{tabular}

\end{table}

\section{Conclusion}
\label{sec:discussion}

Our exact decomposition and sharp first-moment floor relate access to propagated benefits under graph-independent seeding. DPA combines access-feasible quotas, within-quota exchanges, contribution envelopes, and total-outcome certificates. It yields the smallest mean gap at the common access tolerances on both suites with near-optimal band-constrained total outcome. The certificates bound remaining total-outcome improvement within each quota and, after a full quota scan, over all access-feasible allocations. On Pokec, DPA nearly equalises modelled gender outcomes while preserving aggregate benefit. Quotas, seed composition, and propagation shape both access and spillover benefits. Future work can learn intervention-specific spillover effects and extend the sharp floor to multiple groups.

\label{main:end}
\clearpage
\bibliographystyle{ACM-Reference-Format}
\bibliography{main}

\end{document}